\documentclass[11pt,a4paper]{article}

\usepackage{amsmath,amssymb,amsthm}
\usepackage{hyperref}
\usepackage{enumitem}
\usepackage[numbers,sort&compress]{natbib}
\usepackage{geometry}
\usepackage{microtype}
\usepackage{titlesec}
\usepackage{abstract}

\titleformat{\section}{\normalfont\large\bfseries}{\thesection.}{0.5em}{}
\titleformat{\subsection}{\normalfont\normalsize\bfseries}{\thesubsection.}{0.5em}{}

\numberwithin{equation}{section}

\newtheorem{Theorem}{Theorem}[section]
\newtheorem*{Theorem*}{Theorem}
\newtheorem{Corollary}[Theorem]{Corollary}

\newtheorem{Proposition}[Theorem]{Proposition}

\theoremstyle{definition}
\newtheorem{Definition}[Theorem]{Definition}

\newtheorem{Remark}[Theorem]{Remark}

\newcommand{\cA}{\mathcal{A}}
\newcommand{\cH}{\mathcal{H}}
\newcommand{\cD}{\mathcal{D}}
\newcommand{\cL}{\mathcal{L}}
\newcommand{\Qg}{\mathcal{Q}_\gamma}
\newcommand{\Dg}{\mathcal{D}_\gamma}
\newcommand{\Sig}{\Sigma}
\newcommand{\Tr}{\operatorname{Tr}}
\newcommand{\tr}{\operatorname{tr}}
\newcommand{\Dom}{\operatorname{Dom}}

\begin{document}

\title{\textbf{Lindblad-Deformed Spectral Geometry: Heat-Kernel
    Asymptotics and Effective Spectral Dimension}}

\author{Soumadeep Maiti\thanks{E-mail:
  \href{mailto:maitisoumadeep@gmail.com}{\texttt{maitisoumadeep@gmail.com}}.}\\[4pt]
  \small\itshape Ludwig Maximilian University
  of Munich, Geschwister-Scholl-Platz 1 80539 München, Germany}

\date{}
\maketitle
\thispagestyle{empty}

\begin{abstract}
\noindent
We introduce a Lindblad-deformed spectral geometric framework in which
bounded dissipative data deform a standard spectral triple through the
Dirac operator $\Dg=\cD-i\gamma\Sig$, where
$\Sig=\frac{1}{2}\sum_k L_k^\dagger L_k$ is constructed from Lindblad
jump operators $\{L_k\}$. The associated positive operator
$\Qg=\Dg^*\Dg=\cD^2+\gamma^2\Sig^2-i\gamma[\cD,\Sig]$ is identified as
the correct spectral-geometric observable. For smooth endomorphism-valued
Lindblad data, $\Qg$ is of Laplace type and therefore admits a standard
heat-kernel asymptotic expansion with dissipation-modified even
Seeley--DeWitt coefficients. For the scalar deformation
$L=\sqrt{\gamma}\,f$ with $f\in C^\infty(M)$ real-valued, we prove that
the first-order Duhamel correction to the heat trace
$K_\gamma(\sigma)=\Tr(e^{-\sigma\Qg})$ vanishes identically, so that the
first nontrivial dissipative effect appears at order $\gamma^4$. We then
identify the exact Duhamel-level decomposition of the $O(\gamma^4)$
correction into a direct $W_2$ insertion and a quadratic $W_1\otimes W_1$
term. In the round $S^2$ model we determine the explicit deformed operator
and extract the leading local asymptotic contribution of the $W_2$ sector,
while the quadratic sector contributes at leading order as $O(\sigma)$ in
two dimensions. Finally, we define the effective scale-dependent spectral
dimension $d_{s,\mathrm{eff}}(\sigma,\gamma)=
-2\,\partial_{\log\sigma}\log K_\gamma(\sigma)$ and identify its leading
perturbative deformation. The framework provides a mathematically
controlled starting point for incorporating open-system effects into
spectral geometry.

\medskip\noindent
\textbf{Key words:} spectral triple; Dirac operator; Lindblad equation;
heat kernel; Seeley--DeWitt coefficients; spectral dimension

\medskip\noindent
\textbf{MSC2020:} 58B34; 46L87; 47D07; 81Q10; 58J35
\end{abstract}

\section{Introduction}
\label{sec:intro}

Spectral geometry packages metric information into the analytic behaviour
of Dirac- and Laplace-type operators. In the noncommutative setting this
principle is encoded by spectral triples, whose algebraic and analytic
data recover large parts of Riemannian geometry and support the spectral
action programme~\cite{Connes1994,Connes1995,ConnesMarcolli2008,
ChamseddineConnes1997}. At the same time, open quantum systems are most
naturally described by quantum dynamical semigroups generated by
Gorini--Kossakowski--Sudarshan--Lindblad (Lindblad)
operators~\cite{Lindblad1976,GoriniKossakowskiSudarshan1976}. The present
paper addresses a deliberately modest but precise question: if one
deforms the Dirac operator by bounded Lindblad dissipation data, what is
the correct spectral-geometric object, and which heat-kernel structures
survive this deformation?

A first obstacle appears immediately. If $\cL$ is a Lindblad generator
acting on density matrices and $\rho_0$ is a normalised initial state,
then
\begin{equation}
\Tr(e^{t \cL}\rho_0)=\Tr(\rho_0)=1, \qquad t\ge 0,
\end{equation}
by trace preservation. This quantity is therefore spectrally trivial and
cannot serve as a useful geometric probe. The correct spectral-geometric
object is instead the positive operator
\begin{equation}
\Qg:=\Dg^*\Dg,
\qquad
\Dg:=\cD-i\gamma\Sig,
\end{equation}
constructed on the original spinor Hilbert space. The resulting heat trace
\begin{equation}
K_\gamma(\sigma):=\Tr\!\left(e^{-\sigma \Qg}\right)
\end{equation}
is the central analytic observable studied in this paper.

Our work lies near, but not within, several existing generalisations of
standard spectral triples. Symmetric non-self-adjoint and pre-spectral
variants have been developed in operator-algebraic directions
\cite{ConnesLevitinaEtAl2019,LiWang2025}; modular and thermodynamic
enrichments of spectral geometry have also been explored
\cite{BertozziniContiLewkeeratiyutkul2010}; and semigroup-based
constructions related to heat kernels and almost-commutative geometry
have appeared in recent work~\cite{Gakkhar2022,Gakkhar2024}. On the
physics side, spectral dimension has long been used as a probe of scale
dependence in quantum-gravity-inspired
settings~\cite{AmbjornJurkiewiczLoll2005,AlkoferSaueressigZanusso2015}.
The novelty of the present paper is more specific. We do not propose a
full new axiomatics of open spectral triples, nor do we claim primacy
among frameworks beyond self-adjoint spectral triples. Rather, we
identify $\Qg=\Dg^*\Dg$ as the correct positive spectral-geometric
observable associated with a bounded Lindblad deformation of the Dirac
operator, and in the smooth endomorphism-valued setting prove that it is of
Laplace type and supports standard heat-kernel asymptotics. To the best
of our knowledge, the present work is the first to identify
$\Qg=\Dg^*\Dg$ as the central positive spectral-geometric observable
associated with a bounded Lindblad deformation of the Dirac operator,
and to analyze its heat-kernel asymptotics, scalar perturbative rigidity
at order $\gamma^2$, and leading $O(\gamma^4)$ structure in this
framework.

The key technical result is that in the scalar case the order-$\gamma^2$
correction to the heat trace vanishes identically, so that the first
nontrivial dissipative effect appears only at order $\gamma^4$.

The main results are as follows. Under explicit boundedness assumptions
on the Lindblad data, Proposition~\ref{prop:Qgamma} gives the exact
identity
\begin{equation}
\Qg=\cD^2+\gamma^2\Sig^2-i\gamma[\cD,\Sig].
\end{equation}
For scalar deformations $L=\sqrt\gamma\,f$,
Proposition~\ref{prop:vanishing} shows that the first Duhamel correction
to $K_\gamma$ vanishes identically, so that the first nontrivial
perturbative contribution appears at order $\gamma^4$.
Corollary~\ref{cor:SDW} upgrades this to a coefficient-level statement:
$A_0(\Qg)=A_0(\cD^2)$ exactly, and $A_2(\Qg)=A_2(\cD^2)+O(\gamma^4)$,
so there is no order-$\gamma^2$ correction to the first nontrivial
Seeley--DeWitt coefficient. In the round $S^2$
toy model with $f(\theta,\phi)=\cos\theta$, we determine the explicit
deformed operator, isolate the two distinct $O(\gamma^4)$ sectors at
the Duhamel level, and identify at the level of leading local structure
the scalar integrals $\int_M f^4\,d\mathrm{vol}_g$ and
$\int_M f^2|\nabla f|^2\,d\mathrm{vol}_g$ as the natural geometric
quantities arising from the two sectors (Remark~\ref{rem:localinvariants}).
Finally, we define the effective scale-dependent
spectral dimension associated with $K_\gamma$ and identify its leading
perturbative deformation.

The paper is organised as follows. Section~\ref{sec:background} reviews
the spectral-triple and heat-kernel background. Section~\ref{sec:deformation}
introduces the Lindblad deformation and proves the basic
operator-theoretic properties of~$\Qg$. Section~\ref{sec:heatkernel}
shows that, in the smooth endomorphism-valued setting, $\Qg$ is of
Laplace type and supports a standard even heat-kernel expansion. Section~\ref{sec:vanishing} treats scalar Lindblad
data and proves the vanishing proposition. Section~\ref{sec:gamma4}
identifies the two $O(\gamma^4)$ sectors. Section~\ref{sec:S2}
specialises to the round $S^2$ and identifies the local form of the
deformed operator together with the leading small-$\sigma$ asymptotic
structure of the order-$\gamma^4$ correction. Section~\ref{sec:spectraldim}
defines the effective spectral dimension and extracts its perturbative
structure. Section~\ref{sec:discussion} summarises what has been
established and what remains open.

\section{Background}
\label{sec:background}

A spectral triple $(\cA,\cH,\cD)$ consists of a unital involutive
algebra $\cA$ represented on a separable Hilbert space $\cH$ together
with an unbounded self-adjoint operator $\cD$ whose resolvent is compact
modulo the algebra and whose commutators with algebra elements are
bounded~\cite{Connes1994,Connes1995,ConnesMarcolli2008}. In the
commutative model case, one has $\cA=C^\infty(M)$, $\cH=L^2(M,S)$, and
$\cD$ equal to the Dirac operator on a compact Riemannian spin manifold
$(M,g)$. The spectral data then encode the metric, measure, and dimension
in analytic form.

\begin{Definition}
A \emph{spectral triple} $(\cA,\cH,\cD)$ consists of a unital
$*$-algebra $\cA$ represented on a separable Hilbert space $\cH$ and a
self-adjoint operator $\cD \colon \Dom(\cD) \subset \cH \to \cH$
(the \emph{Dirac operator}) such that $a(\cD^2+1)^{-1/2} \in
\mathcal{K}(\cH)$ for all $a \in \cA$, and $[\cD, a] \in
\mathcal{B}(\cH)$ for all $a \in \cA$.
\end{Definition}

For the purposes of the present paper we remain entirely within the
compact spin-manifold setting. We therefore do not require the full
real-structure, orientability, or related auxiliary data of a general
spectral triple, although the conceptual background remains the same.

\subsection{Heat kernels and Seeley--DeWitt asymptotics}

If $P$ is a positive self-adjoint Laplace-type operator on a compact
manifold, then $e^{-\sigma P}$ is trace class for every $\sigma>0$, and
the heat trace admits the standard asymptotic expansion
\begin{equation}
\Tr(e^{-\sigma P})\sim (4\pi\sigma)^{-d/2}\sum_{m\ge 0}\sigma^m A_{2m}(P),
\qquad \sigma\to 0^+,
\end{equation}
where the coefficients $A_{2m}(P)$ are integrals of local curvature
invariants built from the metric and the lower-order part of
$P$~\cite{Vassilevich2003,Gilkey1995,BerlineGetzlerVergne1992}. On
closed manifolds only even powers occur in this standard Laplace-type
setting. In the present paper we remain entirely within that framework
and do not claim the appearance of odd coefficients.

\subsection{Existing generalisations}

Several neighbouring constructions should be kept in view. Pre-spectral
and symmetric non-self-adjoint frameworks replace the Dirac operator by
more general closed operators while preserving substantial parts of the
analytic spectral apparatus~\cite{ConnesLevitinaEtAl2019,LiWang2025}.
Modular and thermodynamic enrichments insert Kubo--Martin--Schwinger
(KMS) or modular data into the spectral-geometric
package~\cite{BertozziniContiLewkeeratiyutkul2010}. These developments
show that one can move beyond the strictly self-adjoint Dirac picture
without abandoning spectral methods, but they are not equivalent to the
bounded Lindblad deformation considered here.

\section{The Lindblad deformation and the operator
  $\mathcal{Q}_\gamma$}
\label{sec:deformation}

Let $(M^d,g)$ be a compact Riemannian spin manifold without boundary,
let $S\to M$ be its spinor bundle, let $\cH=L^2(M,S)$, and let $\cD$ be
the corresponding Dirac operator. We introduce a finite family of bounded
operators $\{L_k\}\subset \mathcal{B}(\cH)$ and define the dissipation
operator
\begin{equation}
\Sig:=\frac{1}{2}\sum_k L_k^\dagger L_k.
\end{equation}
In the present framework the Lindblad background enters only through this
bounded datum and its commutator with $\cD$.

\begin{Definition}
\label{def:lindbladatum}
Let $(\cA, \cH, \cD)$ be a spectral triple on a compact Riemannian spin
manifold $(M^d, g)$ without boundary. A \emph{Lindblad datum} for this
triple is a finite collection $\{L_k\} \subset \mathcal{B}(\cH)$ of
bounded operators satisfying:
\begin{itemize}[leftmargin=2.2em]
  \item[(H1)] $\Sig := \tfrac{1}{2}\sum_k L_k^\dagger L_k \in
              \mathcal{B}(\cH)$ is self-adjoint;
  \item[(H2)] $[\cD, \Sig]$, initially defined on $\Dom(\cD)$, extends
              to a bounded operator on $\cH$.
\end{itemize}
The associated \emph{deformed Dirac operator} is
\begin{equation}
\Dg:=\cD-i\gamma\Sig, \qquad \gamma\ge 0.
\end{equation}
Since $\Sig$ is bounded, $\Dg$ is closed on $\Dom(\cD)$ and has the same
domain as $\cD$.
\end{Definition}

\begin{Proposition}
\label{prop:Qgamma}
Under \textup{(H1)--(H2)}, the operator $\Dg := \cD - i\gamma\Sig$ is
closed on $\Dom(\cD)$ with adjoint $\Dg^* = \cD + i\gamma\Sig$, and the
positive self-adjoint operator $\Qg := \Dg^*\Dg$ is given on
$\Dom(\cD^2)$ by
\begin{equation}
  \Qg = \cD^2 + \gamma^2\Sig^2 - i\gamma[\cD,\Sig].
  \label{eq:Qgformula}
\end{equation}
\end{Proposition}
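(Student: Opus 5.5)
We would argue by bounded perturbation theory for the closed operator $\cD$, followed by von Neumann's theorem on $T^*T$ and a direct expansion on the smaller domain $\Dom(\cD^2)$.

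\emph{Closedness and adjoint.} Since $\cD$ is self-adjoint, it is closed. By (H1), $\Sig$ is bounded and self-adjoint, so $-i\gamma\Sig$ is an everywhere-defined bounded operator. A closed operator plus a bounded operator is closed on the same domain; concretely, the graph norms of $\cD$ and $\Dg$ on $\Dom(\cD)$ are equivalent, with constants depending on $\gamma\|\Sig\|$. This gives closedness of $\Dg$ on $\Dom(\cD)$. For the adjoint we use the standard fact $(T+B)^*=T^*+B^*$ whenever $T$ is densely defined and $B$ is bounded. Together with $\cD^*=\cD$ and $(-i\gamma\Sig)^*=i\gamma\Sig$, this yields $\Dg^*=\cD+i\gamma\Sig$ with $\Dom(\Dg^*)=\Dom(\cD)$.

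\emph{Positivity and self-adjointness of $\Qg$.} Since $\Dg$ is closed and densely defined, von Neumann's theorem shows that $\Dg^*\Dg$ is self-adjoint and positive on its natural domain
\[
\{\psi\in\Dom(\cD):\Dg\psi\in\Dom(\cD)\}.
\]
Positivity also follows directly from $\langle\psi,\Qg\psi\rangle=\|\Dg\psi\|^2\ge 0$.

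\emph{The formula on $\Dom(\cD^2)$.} This is the step requiring care, and (H2) is used here. Take $\psi\in\Dom(\cD^2)$. We need $\Dg\psi=\cD\psi-i\gamma\Sig\psi$ to lie in $\Dom(\cD)$. We have $\cD\psi\in\Dom(\cD)$, so it remains to show $\Sig\psi\in\Dom(\cD)$, i.e. that $\Sig$ preserves $\Dom(\cD)$. For this we read (H2) in the usual sense: for $\phi\in\Dom(\cD)$, the sesquilinear form $\langle\cD\xi,\Sig\phi\rangle-\langle\Sig\xi,\cD\phi\rangle$ is bounded by $C\|\xi\|\|\phi\|$ for $\xi\in\Dom(\cD)$. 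Then $\xi\mapsto\langle\cD\xi,\Sig\phi\rangle$ is bounded, so $\Sig\phi\in\Dom(\cD^*)=\Dom(\cD)$. On $\Dom(\cD)$ this gives $\cD\Sig\phi=\Sig\cD\phi+[\cD,\Sig]\phi$.

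With this in hand we expand:
\[
\Dg^*\Dg\psi=(\cD+i\gamma\Sig)(\cD\psi-i\gamma\Sig\psi)=\cD^2\psi-i\gamma\cD\Sig\psi+i\gamma\Sig\cD\psi+\gamma^2\Sig^2\psi.
\]
Combining the middle terms as $-i\gamma(\cD\Sig-\Sig\cD)\psi=-i\gamma[\cD,\Sig]\psi$ gives \eqref{eq:Qgformula}.

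Finally, one may note that the right-hand side is $\cD^2$ plus a bounded symmetric perturbation. By Kato--Rellich it is self-adjoint on $\Dom(\cD^2)$, so it coincides with $\Qg$ and the natural domain of $\Qg$ is exactly $\Dom(\cD^2)$. The main obstacle is the domain-invariance step for $\Sig$, which depends on interpreting (H2) in the form sense.
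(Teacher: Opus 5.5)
Your proof is correct and follows the same route as the paper. Both use bounded perturbation to get $\Dg^*=\cD+i\gamma\Sig$, expand directly on $\Dom(\cD^2)$, apply Kato--Rellich for self-adjointness, and obtain positivity from $\langle\psi,\Qg\psi\rangle=\|\Dg\psi\|^2$. Your extra steps make rigorous two domain points that the paper's proof uses without comment: you deduce $\Sig\,\Dom(\cD)\subset\Dom(\cD)$ from a form-sense reading of (H2), and you use von Neumann's theorem together with maximality of self-adjoint operators to show that the natural domain of $\Dg^*\Dg$ is exactly $\Dom(\cD^2)$.
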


\begin{proof}
Because $\Sig\in \mathcal{B}(\cH)$ is bounded and self-adjoint, standard
perturbation theory gives
\begin{equation}
\Dg^*=\cD+i\gamma\Sig
\qquad \text{on }\Dom(\cD).
\end{equation}
Expanding on $\Dom(\cD^2)$,
\begin{align}
\Qg
&=(\cD+i\gamma\Sig)(\cD-i\gamma\Sig)\notag\\
&=\cD^2-i\gamma\cD\Sig+i\gamma\Sig\cD+\gamma^2\Sig^2\notag\\
&=\cD^2+\gamma^2\Sig^2+i\gamma(\Sig\cD-\cD\Sig)\notag\\
&=\cD^2+\gamma^2\Sig^2-i\gamma[\cD,\Sig].
\end{align}
Since $[\cD,\Sig]^*=-[\cD,\Sig]$ by self-adjointness of $\cD$ and
$\Sig$, the perturbation $-i\gamma[\cD,\Sig]+\gamma^2\Sig^2$ is bounded
and self-adjoint. Therefore $\Qg$ is self-adjoint on $\Dom(\cD^2)$ by the
Kato--Rellich theorem~\cite{Vassilevich2003}. Positivity follows from
\begin{equation}
\langle \psi,\Qg\psi\rangle=\langle \Dg\psi,\Dg\psi\rangle=
\|\Dg\psi\|^2\ge 0.\qedhere
\end{equation}
\end{proof}

\subsection{The triviality of the naive Lindblad heat trace}

Let $\cL$ be a Lindblad generator acting on density matrices. If $\rho_0$
is normalised, then
\begin{equation}
\Tr(e^{t \cL}\rho_0)=1
\end{equation}
for all $t\ge 0$ by trace
preservation~\cite{Lindblad1976,GoriniKossakowskiSudarshan1976}. This
quantity is therefore spectrally trivial and should not be confused with
the heat trace of a Laplace-type operator. The natural spectral-geometric
object in the present framework is instead
\begin{equation}
K_\gamma(\sigma):=\Tr\!\left(e^{-\sigma\Qg}\right).
\end{equation}
This distinction is essential: the present paper studies the heat trace
of $\Qg$, not the physical Lindblad semigroup itself.

\section{Heat-kernel asymptotics for $\mathcal{Q}_\gamma$}
\label{sec:heatkernel}

From this point onward, when discussing local heat-kernel asymptotics
and Seeley--DeWitt coefficients, we restrict to Lindblad data for which
$\Sig$ is a \emph{smooth self-adjoint bundle endomorphism} of the spinor
bundle $S\to M$. Under this assumption $[\cD,\Sig]$ is a zeroth-order
differential operator, so both $\gamma^2\Sig^2$ and $-i\gamma[\cD,\Sig]$
are smooth zeroth-order endomorphism-valued perturbations. The scalar
deformation $L=\sqrt{\gamma}\,f$ with $f\in C^\infty(M)$ is a
prototypical instance of this setting, since $\Sig=\frac{\gamma}{2}f^2$
is then multiplication by a smooth real function.

The principal symbol of $\Qg$ is the same as that of $\cD^2$, namely
$|\xi|^2\mathbf{1}_S$. Under the smooth endomorphism assumption the two
additional terms, $\gamma^2\Sig^2$ and $-i\gamma[\cD,\Sig]$, are
bounded smooth zeroth-order endomorphism-valued perturbations. Hence
$\Qg$ is a Laplace-type operator in the standard
heat-kernel sense~\cite{Vassilevich2003,Gilkey1995}. It follows that
\begin{equation}
K_\gamma(\sigma)\sim (4\pi\sigma)^{-d/2}\sum_{m\ge 0}\sigma^m A_{2m}(\gamma),
\qquad \sigma\to 0^+,
\end{equation}
with dissipation-modified even coefficients $A_{2m}(\gamma)$. The
deformation changes only the lower-order endomorphism term in the
Laplace-type decomposition, so the standard Seeley--DeWitt machinery
applies without modification, and no odd coefficients appear on a closed
manifold.

\subsection{Scalar power counting}
\label{sec:powercounting}

For the scalar deformation
\begin{equation}
L=\sqrt{\gamma}\,f, \qquad f=f^*\in C^\infty(M),
\end{equation}
one has
\begin{equation}
\Sig_\gamma:=\frac{\gamma}{2}\, f^2.
\end{equation}
Note that in this specialization $\Sig_\gamma$ itself depends on $\gamma$, so
$\Dg$ and $\Qg$ are to be read as $\mathcal{D}[\Sigma_\gamma]$ and
$\mathcal{Q}[\Sigma_\gamma]$ respectively. The perturbative hierarchy
below reflects this combined $\gamma$-scaling. We choose this convention
so that the commutator sector enters $\Qg$ at order $\gamma^2$ and the
multiplicative sector at order $\gamma^4$, yielding a clean two-scale
perturbative structure. Therefore
\begin{equation}
-i\gamma[\cD,\Sig_\gamma]=-\frac{i\gamma^2}{2}[\cD,f^2],
\qquad
\gamma^2\Sig_\gamma^2=\frac{\gamma^4}{4}f^4.
\end{equation}
It is thus natural to write
\begin{equation}
\Qg=\cD^2+\gamma^2 W_1+\gamma^4 W_2,
\qquad
W_1:=-\frac{i}{2}[\cD,f^2],
\qquad
W_2:=\frac{1}{4} f^4.
\label{eq:Qg_scalar_split}
\end{equation}
The deformation generated by the commutator term appears already at order
$\gamma^2$, whereas the purely multiplicative contribution enters only at
order $\gamma^4$.

\section{Scalar Lindblad deformation and the vanishing proposition}
\label{sec:vanishing}

For smooth functions on a spin manifold one has the standard identity
\begin{equation}
[\cD,f]=c(df),
\end{equation}
where $c(df)$ denotes Clifford multiplication by the differential
$df$~\cite{BerlineGetzlerVergne1992,LawsonMichelsohn1989}.
It follows that
\begin{equation}
[\cD,f^2]=2f\,c(df),
\end{equation}
and hence
\begin{equation}
W_1=-if\,c(df),
\qquad
W_2=\frac{1}{4} f^4.
\end{equation}
The Duhamel expansion of the deformed heat trace then begins as
\begin{equation}
K_\gamma(\sigma)=K_0(\sigma)+\Delta_1(\sigma)+O(\gamma^4),
\qquad
K_0(\sigma):=\Tr(e^{-\sigma \cD^2}),
\end{equation}
with first correction
\begin{equation}
\Delta_1(\sigma):=-\gamma^2\sigma\,\Tr(W_1e^{-\sigma\cD^2}).
\end{equation}

\begin{Proposition}[Vanishing of the first Duhamel correction]
\label{prop:vanishing}
Let $(\cA, \cH, \cD)$ be a spectral triple equipped with Lindblad datum
$\{L\}$ in the sense of Definition~\ref{def:lindbladatum}, with
$L = \sqrt{\gamma}\,f$, $f = f^* \in C^\infty(M)$. Then
\begin{equation}
  \Delta_1(\sigma) := -\gamma^2\sigma\,
  \Tr\!\left(W_1\,e^{-\sigma\cD^2}\right) = 0
  \qquad \text{for all } \sigma > 0.
\end{equation}
Equivalently, the first nonzero perturbative correction to
$K_\gamma(\sigma)$ enters at order $\gamma^4$.
\end{Proposition}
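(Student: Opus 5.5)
The plan is to show that $\Tr(W_1 e^{-\sigma\cD^2})$ vanishes exactly, by a commutator/cyclicity argument in the eigenbasis of $\cD$. Local heat-kernel asymptotics only give vanishing order by order in $\sigma$, so they are not enough here. By Section~\ref{sec:vanishing} we have $W_1=-\frac{i}{2}[\cD,f^2]$, so it suffices to prove
\begin{equation*}
\Tr\!\left([\cD,g]\,e^{-\sigma\cD^2}\right)=0,\qquad g:=f^2\in C^\infty(M),\ \sigma>0.
\end{equation*}
The first step is to check that this trace makes sense. Since $M$ is compact, $e^{-\sigma\cD^2}$ is trace class for every $\sigma>0$. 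By (H2), or directly from $[\cD,g]=c(dg)$, the commutator $[\cD,g]$ is bounded. Hence $[\cD,g]e^{-\sigma\cD^2}$ is trace class, and its trace may be computed in any orthonormal basis, with an absolutely convergent sum.

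The second step is to choose the orthonormal eigenbasis $\{\psi_n\}$ of $\cD$, with $\cD\psi_n=\lambda_n\psi_n$ and $\lambda_n\in\mathbb{R}$. This basis exists because $\cD$ is self-adjoint with compact resolvent, and each $\psi_n$ is smooth by elliptic regularity. Then
\begin{equation*}
\Tr\!\left([\cD,g]e^{-\sigma\cD^2}\right)=\sum_n e^{-\sigma\lambda_n^2}\,\langle\psi_n,(\cD g-g\cD)\psi_n\rangle .
\end{equation*}
Since $g\psi_n$ is smooth, it lies in $\Dom(\cD)$. Self-adjointness of $\cD$ then gives $\langle\psi_n,\cD g\psi_n\rangle=\langle\cD\psi_n,g\psi_n\rangle=\lambda_n\langle\psi_n,g\psi_n\rangle$. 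On the other side, $\langle\psi_n,g\cD\psi_n\rangle=\lambda_n\langle\psi_n,g\psi_n\rangle$. Each summand therefore vanishes identically, so $\Delta_1(\sigma)=0$ for all $\sigma>0$. The realness of $f$, and hence of $g$, is not even needed for this cancellation. It is used only to ensure $\Sig$ is self-adjoint, so that Proposition~\ref{prop:Qgamma} applies.

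I expect the main point needing care to be exactly these domain and trace-class issues. The cancellation uses the diagonal form of $\cD$ and cannot be obtained from naive cyclicity $\Tr(\cD A)=\Tr(A\cD)$, because $\cD$ is unbounded. The plan handles this by never separating $\cD g$ and $g\cD$ into two traces: the bounded combination $[\cD,g]$ is paired with the trace-class heat operator throughout, and only matrix elements on smooth vectors are evaluated. As a consistency check, the local heat-kernel density of $\tr(f\,c(df)e^{-\sigma\cD^2})$ also vanishes term by term, since Clifford generators are traceless. This is compatible with the result but weaker than it.

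For the ``equivalently'' clause, I would expand $e^{-\sigma\Qg}$ using the splitting \eqref{eq:Qg_scalar_split} and the Duhamel formula. Every term beyond the $\gamma^2$ one involves either $\gamma^4W_2$ or two insertions of $\gamma^2W_1$, and so is $O(\gamma^4)$. The remainder is controlled in trace norm by the boundedness of $W_1$ and $W_2$ together with $\|e^{-s\Qg}\|\le 1$ (positivity of $\Qg$) and the trace-class property of $e^{-\sigma\cD^2}$. Consequently $K_\gamma(\sigma)=K_0(\sigma)+O(\gamma^4)$.
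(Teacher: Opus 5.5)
Your proof is correct. It shares the paper's core observation: $W_1=-\tfrac{i}{2}[\cD,f^2]$ is a commutator with $\cD$, and $e^{-\sigma\cD^2}$ is a function of $\cD$, so the trace must vanish. The implementation differs.

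The paper rewrites $\Tr([\cD,f^2]e^{-\sigma\cD^2})$ as $\Tr([\cD,f^2e^{-\sigma\cD^2}])$ and appeals to cyclicity. That requires showing separately that $\cD f^2e^{-\sigma\cD^2}$ and $f^2e^{-\sigma\cD^2}\cD$ are trace class, and then using $\Tr(\cD A)=\Tr(A\cD)$ for the unbounded $\cD$.

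You instead keep the bounded operator $[\cD,g]=c(dg)$ intact and pair it with the trace-class heat operator. You then evaluate the trace in the eigenbasis of $\cD$, where each diagonal element $\langle\psi_n,[\cD,g]\psi_n\rangle$ vanishes. This uses only boundedness of the commutator and the trace-class property of $e^{-\sigma\cD^2}$. It also gives more: $\Tr([\cD,g]\,\phi(\cD))=0$ for every trace-class $\phi(\cD)$ and every smooth, even complex, $g$.

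Your route also avoids a weak point in the paper's justification. The displayed factorization $\cD f^2 e^{-\sigma\cD^2}=(\cD e^{-\sigma\cD^2/2})(e^{-\sigma\cD^2/2}f^2)$ puts $f^2$ on the wrong side: its right-hand side equals $\cD e^{-\sigma\cD^2}f^2$. The trace-class claim itself can be repaired via $\cD f^2e^{-\sigma\cD^2}=c(df^2)e^{-\sigma\cD^2}+f^2\cD e^{-\sigma\cD^2}$. Even so, the identity $\Tr(\cD A)=\Tr(A\cD)$ for unbounded $\cD$ is most directly proved by exactly your eigenbasis computation. The paper's version offers a basis-free ``trace of a commutator'' formulation; yours is a complete argument under minimal hypotheses.

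On the ``equivalently'' clause you go further than the paper, which only points to the formal expansion in Appendix~\ref{app:duhamel}. Your remainder estimate needs one ingredient you did not list. Since $\|e^{-s\cD^2}\|_1$ grows like $s^{-d/2}$ as $s\to0$, already $\int_0^{\sigma/2}\|e^{-s\cD^2}\|_1\,ds$ diverges for $d\ge 2$. So you must put the trace norm on a factor whose time argument is at least a fixed fraction of $\sigma$, and in some regions this is the $e^{-t\Qg}$ factor. Its trace norm is controlled uniformly in small $\gamma$: $\Qg\ge\cD^2-\gamma^2\|W_1\|$ together with min--max gives $\Tr e^{-t\Qg}\le e^{t\gamma^2\|W_1\|}\Tr e^{-t\cD^2}$.
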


\begin{proof}
Since $W_1 = -\tfrac{i}{2}[\cD, f^2]$ and $[\cD, e^{-\sigma\cD^2}] = 0$,
\begin{equation}
  \Tr\bigl([\cD, f^2]\,e^{-\sigma\cD^2}\bigr)
  = \Tr\bigl([\cD,\, f^2 e^{-\sigma\cD^2}]\bigr) = 0.
\end{equation}
The first equality uses $[\cD,e^{-\sigma\cD^2}]=0$, since
$e^{-\sigma\cD^2}$ is a function of $\cD^2$. To justify the second
equality, note that $f^2$ is bounded and $e^{-\sigma\cD^2}$ is trace
class for every $\sigma>0$ on a compact manifold, hence
$f^2e^{-\sigma\cD^2}\in \mathcal{L}^1(\cH)$. Moreover,
\begin{equation}
\cD f^2 e^{-\sigma\cD^2}=
\bigl(\cD\,e^{-\sigma\cD^2/2}\bigr)\!\bigl(e^{-\sigma\cD^2/2}f^2\bigr),
\end{equation}
and both factors are Hilbert--Schmidt (operators whose Hilbert--Schmidt
norm $\|T\|_{\mathrm{HS}}^2=\Tr(T^*T)<\infty$), which follows from the
spectral resolution of $\cD$ and the finiteness of
$\sum_n \lambda_n^2 e^{-\sigma \lambda_n^2}$. Hence
$\cD f^2 e^{-\sigma\cD^2}$ is trace class, and similarly
$f^2e^{-\sigma\cD^2}\cD$ is trace class. Therefore the trace of the
commutator vanishes, giving $\Delta_1(\sigma)=0$.
\end{proof}

\begin{Remark}
The cyclicity argument above provides the clean operator-theoretic proof.
A complementary geometric viewpoint is consistent with the same conclusion:
for any smooth real $f$, the commutator $[\cD,f]=c(df)$ is Clifford
multiplication by $df$, and the fibrewise spinor trace $\tr_S(c(\xi))=0$
for every $\xi\in T_x^*M$, since $c(\xi)$ is an odd Clifford element
whose eigenvalues $\pm i|\xi|$ cancel in the trace. This
\emph{Clifford-tracelessness} is consistent with the exact vanishing
of $\Delta_1(\sigma)$.
\end{Remark}

\begin{Corollary}
\label{cor:rigidity}
In the scalar case, $K_\gamma(\sigma) = K_0(\sigma) + O(\gamma^4)$
at fixed $\sigma > 0$. The spectral geometry is perturbatively rigid
at order~$\gamma^2$.
\end{Corollary}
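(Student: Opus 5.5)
The plan is to derive the corollary from the Duhamel expansion of $K_\gamma(\sigma)$ and Proposition~\ref{prop:vanishing}. Two things are needed: the Duhamel remainder must genuinely be $O(\gamma^4)$ at fixed $\sigma>0$, and the $O(\gamma^2)$ term must vanish.

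I start from the scalar splitting \eqref{eq:Qg_scalar_split}, writing $\Qg=\cD^2+V_\gamma$ with $V_\gamma:=\gamma^2W_1+\gamma^4W_2$. By the standing smooth-endomorphism assumption, $V_\gamma$ is bounded and self-adjoint, with $\|V_\gamma\|\le C\gamma^2$ for $\gamma\le 1$. Here $C$ depends only on $\|f\,c(df)\|_\infty$ and $\|f\|_\infty^4$. The Duhamel formula
\begin{equation*}
e^{-\sigma\Qg}=e^{-\sigma\cD^2}-\int_0^\sigma e^{-(\sigma-s)\Qg}\,V_\gamma\,e^{-s\cD^2}\,ds
\end{equation*}
is iterated once more, replacing $e^{-(\sigma-s)\Qg}$ inside the integral by its own Duhamel expansion. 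This yields
\begin{equation*}
e^{-\sigma\Qg}=e^{-\sigma\cD^2}-\int_0^\sigma e^{-(\sigma-s)\cD^2}V_\gamma e^{-s\cD^2}\,ds+R_\gamma(\sigma),
\end{equation*}
where $R_\gamma(\sigma)$ is a double integral containing two factors of $V_\gamma$.

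Taking traces, cyclicity and the semigroup property turn the first-order term into $-\sigma\Tr(V_\gamma e^{-\sigma\cD^2})$. This equals $\Delta_1(\sigma)-\gamma^4\sigma\Tr(W_2e^{-\sigma\cD^2})$, and the second piece is manifestly $O(\gamma^4)$ at fixed $\sigma$, since $W_2$ is bounded and $e^{-\sigma\cD^2}$ is trace class. By Proposition~\ref{prop:vanishing}, $\Delta_1(\sigma)=0$.

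The main obstacle is the trace-norm estimate $\|R_\gamma(\sigma)\|_{\mathcal L^1}=O(\gamma^4)$. It must be uniform in the integration variables, because there are endpoint regions where a single heat factor $e^{-t\cD^2}$ has $t\to0$ and is no longer trace class. I would first establish the a priori bound $\|e^{-t\Qg}\|\le e^{t\|V_\gamma\|}$; in fact $\|e^{-t\Qg}\|\le 1$ by positivity of $\Qg$ (Proposition~\ref{prop:Qgamma}).

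Next comes the distribution of the trace-class property. In each integrand the time arguments of the heat factors sum to $\sigma$, so at least one factor carries time at least $\sigma/3$. That factor lies in $\mathcal L^1$ with norm at most $\Tr(e^{-\sigma\cD^2/3})$ after comparison. If the large factor is $e^{-t\Qg}$ rather than $e^{-t\cD^2}$, I would first bound its trace norm by a constant times $\Tr(e^{-t\cD^2})$, using the min-max principle with $\Qg\ge\cD^2-\|V_\gamma\|$. The remaining factors are bounded in operator norm by constants. Hölder's inequality $\|ABC\|_1\le\|A\|\,\|B\|_1\,\|C\|$ then gives $\|R_\gamma(\sigma)\|_1\le \tfrac{\sigma^2}{2}\,C^2\gamma^4 e^{\sigma C\gamma^2}\Tr(e^{-\sigma\cD^2/3})$.

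Combining these, $K_\gamma(\sigma)=K_0(\sigma)+O(\gamma^4)$ at each fixed $\sigma>0$, with a constant locally uniform in $\sigma$. The rigidity statement is a rephrasing of this: the coefficient of $\gamma^2$ in the expansion of $K_\gamma(\sigma)$ is $-\sigma\Tr(W_1e^{-\sigma\cD^2})$, and Proposition~\ref{prop:vanishing} shows this is identically zero.
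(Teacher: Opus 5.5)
Your proposal is correct and follows the paper's own route: expand $K_\gamma(\sigma)$ by Duhamel, remove the order-$\gamma^2$ term with Proposition~\ref{prop:vanishing}, and absorb the $W_2$ insertion and the second-order remainder into $O(\gamma^4)$. The paper simply asserts $K_\gamma=K_0+\Delta_1+O(\gamma^4)$, whereas you prove the trace-norm bound on the remainder (via the $\sigma/3$ splitting, $\|e^{-t\Qg}\|\le 1$, and the min-max comparison $\Tr(e^{-t\Qg})\le e^{t\|V_\gamma\|}\Tr(e^{-t\cD^2})$), which makes the $O(\gamma^4)$ claim rigorous and locally uniform in $\sigma$.
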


\begin{Corollary}[Coefficient-level rigidity in the scalar case]
\label{cor:SDW}
Let $\Qg=\cD^2+\gamma^2W_1+\gamma^4W_2$ be the scalar Lindblad
deformation, with heat-trace expansion
\begin{equation}
K_\gamma(\sigma)\sim (4\pi\sigma)^{-d/2}\sum_{m\ge 0}\sigma^m A_{2m}(\Qg),
\qquad \sigma\to 0^+.
\label{eq:SDW_expansion}
\end{equation}
Then
\begin{equation}
A_0(\Qg)=A_0(\cD^2),
\label{eq:A0_exact}
\end{equation}
and
\begin{equation}
A_2(\Qg)=A_2(\cD^2)+O(\gamma^4).
\label{eq:A2_order4}
\end{equation}
In particular, there is no order-$\gamma^2$ correction to the first
nontrivial Seeley--DeWitt coefficient in the scalar deformation.
\end{Corollary}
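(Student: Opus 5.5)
The cleanest route is to use the standard local formulas for Laplace-type operators, and only secondarily Proposition~\ref{prop:vanishing}. Write $\Qg=\nabla^*\nabla-E_\gamma$ in the Bochner form. The Lichnerowicz formula gives $\cD^2=\nabla^*\nabla+\tfrac{R}{4}$, so $E_0=-\tfrac R4\mathbf{1}_S$. The deformation adds only zeroth-order endomorphisms, and the connection part is untouched, so
\begin{equation}
E_\gamma=-\tfrac R4\mathbf{1}_S-\gamma^2W_1-\gamma^4W_2 .
\end{equation}
We then recall the standard formulas (Gilkey, Vassilevich):
\begin{equation}
A_0(P)=\int_M\tr_S(\mathbf 1)\,d\mathrm{vol}_g,
\qquad
A_2(P)=\frac16\int_M\tr_S(6E+R)\,d\mathrm{vol}_g .
\end{equation}
These hold for any Laplace-type $P=\nabla^*\nabla-E$ on a closed manifold.

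\textbf{Step 1: $A_0$.} Since $\Qg$ has the same principal symbol $|\xi|^2\mathbf 1_S$ as $\cD^2$ (Section~\ref{sec:heatkernel}), $A_0$ depends only on the metric and the rank of $S$. This gives \eqref{eq:A0_exact} exactly, to all orders in $\gamma$.

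\textbf{Step 2: $A_2$.} Subtracting the undeformed coefficient gives
\begin{equation}
A_2(\Qg)-A_2(\cD^2)=-\gamma^2\int_M\tr_S(W_1)\,d\mathrm{vol}_g-\gamma^4\int_M\tr_S(W_2)\,d\mathrm{vol}_g .
\end{equation}
We have $W_1=-if\,c(df)$, and the fibrewise Clifford trace vanishes, $\tr_S(c(\xi))=0$ for every $\xi\in T^*_xM$. This holds because $c(\xi)$ anticommutes with $c(\eta)$ for $\eta\perp\xi$, and $c(\eta)$ is invertible when $\eta\neq0$; for $d=1$ one uses the eigenvalues $\pm i|\xi|$ or a direct check. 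Hence the $\gamma^2$ term vanishes pointwise. The remainder is
\begin{equation}
-\tfrac{\gamma^4}{4}\,\tr_S(\mathbf 1)\int_M f^4\,d\mathrm{vol}_g ,
\end{equation}
which is indeed $O(\gamma^4)$. This proves \eqref{eq:A2_order4}, and in fact identifies the $\gamma^4$ coefficient exactly, since $A_2$ is linear in $E$.

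\textbf{Cross-check via Proposition~\ref{prop:vanishing}.} Alternatively, one could argue from the heat trace itself. Proposition~\ref{prop:vanishing} gives $K_\gamma=K_0+R_\gamma(\sigma)$, where the Duhamel remainder is $O(\gamma^4)$. One would then need the remainder to be $O(\gamma^4)$ uniformly as an asymptotic series in $\sigma$, so that coefficients can be compared term by term. That uniformity is the main obstacle on this route: justifying the exchange of the $\gamma$-expansion with the small-$\sigma$ asymptotics requires bounds on the Duhamel terms of the form $\sigma^k\,\Tr(\cdots e^{-s\cD^2}\cdots)\lesssim \sigma^{k-d/2}$. I would therefore rely on the local formula argument above, which avoids this issue entirely, and present the Duhamel consistency only as a remark. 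The one point that needs care in the main argument is the sign and normalisation convention for $E$ in $A_2$. Because $\tr_S W_1=0$ pointwise, the $\gamma^2$ cancellation does not depend on that convention.
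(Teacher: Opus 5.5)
Your proof is correct, but it takes a different route from the paper. The paper treats $A_0$ exactly as in your Step~1. It obtains \eqref{eq:A2_order4} from the heat trace itself: Proposition~\ref{prop:vanishing} (trace cyclicity) kills the $\gamma^2$ Duhamel term $-\gamma^2\sigma\Tr(W_1e^{-\sigma\cD^2})$ for every $\sigma>0$, so its small-$\sigma$ expansion vanishes coefficientwise. Combined with Corollary~\ref{cor:rigidity}, this is read as the absence of a $\gamma^2$ term in $A_2$.

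You instead insert $E_\gamma=-\tfrac R4-\gamma^2W_1-\gamma^4W_2$ into the universal formula $A_2=\tfrac16\int_M\tr_S(6E+R)\,d\mathrm{vol}_g$. You then make Clifford-tracelessness the key step; the paper only records it as a consistency remark. Your sign conventions are right: they give $A_2(\cD^2)=-\tfrac1{12}\mathrm{rk}(S)\int_M R\,d\mathrm{vol}_g$, which reproduces the $-\tfrac13$ in \eqref{eq:K0s2}.

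What your route buys:
\begin{itemize}
\item It never interchanges the $\gamma$-expansion with the $\sigma\to0^+$ asymptotics. The paper does this tacitly; it is legitimate by the standard uniformity of the parametrix in smooth families, together with $A_2$ being polynomial in $\gamma$, but the paper does not argue it.
\item It yields the exact identity $A_2(\Qg)=A_2(\cD^2)-\tfrac{\gamma^4}{4}\mathrm{rk}(S)\int_M f^4\,d\mathrm{vol}_g$, with no higher-order tail. This upgrades the $S^2$ value $-\gamma^4/10$ to an exact statement about the $\sigma^0$ coefficient.
\item Since $A_2$ is linear in $E$, the $W_1\otimes W_1$ sector cannot enter before $A_4$ (through its $E^2$ term). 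This is consistent with \eqref{eq:W1W1leading}.
\end{itemize}
What the paper's route buys is an exact vanishing at every finite $\sigma$. It uses only boundedness and trace-class properties, not local invariant theory.

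One correction to your aside. For $d=1$ the spinor bundle has rank $1$ and $c(e^1)$ is a single scalar, $i$ or $-i$. So $\tr_S c(\xi)\neq 0$ for $\xi\neq 0$, and the $\gamma^2$ term does not vanish pointwise. It does vanish after integration: $\tr_S(W_1)=-\tfrac{i}{2}\tr_S c(d(f^2))$ is a constant multiple of the derivative of $f^2$ along the unit tangent field. Hence $\int_M\tr_S(W_1)\,d\mathrm{vol}_g=0$ by Stokes, and your conclusion stands.
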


\begin{proof}
The equality~\eqref{eq:A0_exact} follows because the principal symbol of
$\Qg$ equals that of $\cD^2$, namely $|\xi|^2\mathbf{1}_S$, so the
leading heat coefficient, which depends only on this symbol (and hence
on the volume and spinor-bundle rank), is unchanged by the bounded
lower-order perturbation.

For~\eqref{eq:A2_order4}, note that the entire order-$\gamma^2$
contribution to the heat trace vanishes identically by
Proposition~\ref{prop:vanishing}, so its small-$\sigma$ asymptotic
expansion vanishes coefficientwise. In particular, $A_2(\Qg)$ receives no
$O(\gamma^2)$ correction. Combined with Corollary~\ref{cor:rigidity},
which gives $K_\gamma(\sigma)=K_0(\sigma)+O(\gamma^4)$, it follows that
$A_2(\Qg)=A_2(\cD^2)+O(\gamma^4)$.
\end{proof}

\section{Structure of the first nonzero correction}
\label{sec:gamma4}

Expanding the Duhamel series for \eqref{eq:Qg_scalar_split} to order
$\gamma^4$ yields an exact decomposition of the first nonzero correction
into two distinct sectors. We first define
\begin{equation}
\Delta_{2,a}(\sigma):=-\gamma^4\sigma\,\Tr(W_2\,e^{-\sigma\cD^2}),
\label{eq:Delta2a_def}
\end{equation}
coming from the direct first-order insertion of $W_2$, and
\begin{equation}
\Delta_{2,b}(\sigma):=\gamma^4\!\int_0^\sigma\!\!\int_0^s
\Tr\!\left(e^{-(\sigma-s)\cD^2}W_1\,e^{-(s-r)\cD^2}W_1\,e^{-r\cD^2}
\right)\,dr\,ds,
\label{eq:Delta2b_def}
\end{equation}
coming from the quadratic Duhamel term in $W_1$.

\begin{Proposition}[Exact structure of the order-$\gamma^4$ correction]
\label{prop:gamma4structure}
Let $\Qg=\cD^2+\gamma^2W_1+\gamma^4W_2$ as in \eqref{eq:Qg_scalar_split}.
Then the heat trace admits the expansion
\begin{equation}
K_\gamma(\sigma)=K_0(\sigma)+\Delta_{2,a}(\sigma)+\Delta_{2,b}(\sigma)
+O(\gamma^6).
\label{eq:Kgamma_gamma4}
\end{equation}
\end{Proposition}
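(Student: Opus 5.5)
We plan to prove the statement by iterating the Duhamel formula for the perturbation $V_\gamma:=\gamma^2W_1+\gamma^4W_2$ of $H_0:=\cD^2$. Both $W_1=-if\,c(df)$ and $W_2=\frac14 f^4$ are bounded, so $V_\gamma$ is bounded with $\|V_\gamma\|\le C\gamma^2$ for $\gamma\le 1$. The Dyson--Phillips series
\begin{equation*}
e^{-\sigma(H_0+V_\gamma)}=\sum_{n\ge 0}(-1)^n\int_{0\le r_1\le\cdots\le r_n\le\sigma}
e^{-(\sigma-r_n)H_0}V_\gamma e^{-(r_n-r_{n-1})H_0}\cdots V_\gamma e^{-r_1H_0}\,dr_1\cdots dr_n
\end{equation*}
then converges in operator norm for bounded $V_\gamma$. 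The order-$n$ term is of size $O(\gamma^{2n})$ in the coupling.

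We next take traces term by term. In each integrand at least one of the $n+1$ heat factors has time at least $\sigma/(n+1)$. Its trace norm is therefore bounded by $K_0(\sigma/(n+1))$, and the other factors are contractions. Hence the trace of the $n$-th term is bounded by
\begin{equation*}
\frac{(\sigma\|V_\gamma\|)^n}{n!}\,K_0\!\left(\frac{\sigma}{n+1}\right).
\end{equation*}
Since $K_0(t)\le Ct^{-d/2}$ for small $t$, this grows only polynomially in $n$, so the series of traces converges absolutely at fixed $\sigma>0$. The tail $n\ge 3$ is then $O(\gamma^6)$ uniformly for $\gamma$ in a bounded interval.

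We now sort the low-order terms by powers of $\gamma$.
\begin{itemize}[leftmargin=2.2em]
\item The $n=0$ term gives $K_0(\sigma)$.
\item The $n=1$ term gives $-\sigma\Tr(V_\gamma e^{-\sigma\cD^2})$, using cyclicity to collapse the two heat factors into $e^{-\sigma\cD^2}$. This splits into $\Delta_1(\sigma)$, which vanishes by Proposition~\ref{prop:vanishing}, and $\Delta_{2,a}(\sigma)$.
\item The $n=2$ term equals $\Delta_{2,b}(\sigma)$ plus cross terms containing at least one $W_2$. These are $O(\gamma^6)$ and $O(\gamma^8)$ by the trace-norm bound above.
\end{itemize}
Collecting these gives \eqref{eq:Kgamma_gamma4}.

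The main obstacle is making the trace-class manipulations rigorous. Specifically, we must exchange the trace with the time integrals and control the $n$-th order traces uniformly. We expect to handle this with the pigeonhole choice of a long heat factor described above, combined with Hölder's inequality in Schatten classes. As a first attempt, we would use the simpler bound $\|e^{-tH_0}\|_1\le K_0(t)$ on the longest factor. The needed measurability and continuity in the time variables follow from strong continuity of the semigroup.
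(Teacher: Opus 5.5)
Your proposal is correct and follows essentially the same route as the paper: expand $e^{-\sigma\Qg}$ by Duhamel in $V_\gamma=\gamma^2W_1+\gamma^4W_2$, use cyclicity to collapse the first-order term into $-\sigma\Tr(V_\gamma e^{-\sigma\cD^2})$, remove its $\gamma^2$ part with Proposition~\ref{prop:vanishing}, and identify the $W_1W_1$ part of the second-order term with $\Delta_{2,b}$. The one difference is that you also justify exchanging the trace with the time integrals and bound the $O(\gamma^6)$ remainder, using the trace-norm estimate on the longest heat factor, whereas Appendix~\ref{app:duhamel} of the paper simply asserts these.
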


\begin{proof}
This follows from the Duhamel expansion recorded in
Appendix~\ref{app:duhamel}, together with Proposition~\ref{prop:vanishing},
which eliminates the order-$\gamma^2$ term.
\end{proof}

Thus the first nonzero perturbative correction is completely exhausted by
the direct $W_2$ insertion and the quadratic $W_1\otimes W_1$ term.

The decomposition~\eqref{eq:Kgamma_gamma4} is exact at the Duhamel level;
only the local small-$\sigma$ evaluations below are asymptotic. For the
two-dimensional rank-$2$ spinor setting relevant to the later $S^2$
specialization, the leading local heat-kernel asymptotic gives
\begin{equation}
\Tr(V\,e^{-\sigma \cD^2})\sim \frac{1}{4\pi\sigma}\int_M
\tr_S(V(x))\,d\mathrm{vol}_g + O(\sigma^0),
\qquad \sigma\to 0^+.
\label{eq:localasymp}
\end{equation}
Applying \eqref{eq:localasymp} to $W_2=\tfrac{1}{4}f^4\mathbf{1}_S$,
and using $\tr_S(W_2)=\tfrac{1}{2}f^4$ for a rank-$2$ spinor bundle,
gives the general formula
\begin{equation}
\Delta_{2,a}(\sigma)\sim
-\frac{\gamma^4}{8\pi}\int_M f^4\,d\mathrm{vol}_g
+O(\gamma^4\sigma).
\label{eq:W2leading}
\end{equation}
Similarly, the quadratic $W_1\otimes W_1$ sector contributes a leading
term of order $\gamma^4\sigma$ in two dimensions. In the absence of a
globally fixed Clifford normalisation and exact spinor harmonic evaluation,
we leave its signed local coefficient in the form
\begin{equation}
\Delta_{2,b}(\sigma)\sim C_{W_1W_1}\,\gamma^4\sigma
+O(\gamma^4\sigma^2),
\label{eq:W1W1leading}
\end{equation}
where $C_{W_1W_1}$ is determined by the detailed local Clifford
representation and the exact spectral data. These are leading asymptotic
statements, not exact identities. The exact order-$\gamma^4$ coefficient
requires either a full spinor spherical harmonic computation or a
controlled pseudodifferential expansion, and is deferred.

\begin{Remark}[Leading local structure of the two $O(\gamma^4)$ sectors]
\label{rem:localinvariants}
At the level of leading local heat-kernel structure, the two
$O(\gamma^4)$ sectors are naturally associated with the following scalar
integrals. The direct $W_2$ sector involves
\begin{equation}
\int_M f^4\,d\mathrm{vol}_g,
\end{equation}
since $W_2 = \tfrac{1}{4}f^4$ and the local asymptotic~\eqref{eq:localasymp}
gives a contribution proportional to $\int_M \tr_S(W_2)\,d\mathrm{vol}_g$.
The quadratic $W_1\otimes W_1$ sector, at the level of the leading
diagonal approximation, involves
\begin{equation}
\int_M f^2|\nabla f|^2\,d\mathrm{vol}_g,
\end{equation}
since $W_1 = -if\,c(df)$ gives
\begin{equation}
W_1^2=(-i)^2 f^2 c(df)^2=-f^2 c(df)^2=f^2|df|_g^2\mathbf{1}_S,
\end{equation}
using $c(\alpha)^2=-|\alpha|_g^2\mathbf{1}_S$, and therefore
$\tr_S(W_1^2)=\mathrm{rk}(S)\cdot f^2|df|_g^2\ge 0$.
A full determination of whether additional local invariants
(for example curvature couplings) contribute at this order would require
a controlled pseudodifferential coefficient calculation and is not
attempted here.
\end{Remark}

\section{The $S^2$ toy model}
\label{sec:S2}

We now specialise to the round two-sphere of unit radius. The scalar
curvature is $R=2$, the volume is $\mathrm{vol}(S^2)=4\pi$, and the
spinor bundle has rank $2$~\cite{LawsonMichelsohn1989}. The Dirac spectrum
on the round sphere is well known~\cite{CamporesiHiguchi1996}: the
eigenvalues of $\cD$ are
\begin{equation}
\lambda_n^\pm=\pm(n+1),
\qquad n=0,1,2,\dots,
\end{equation}
with multiplicity $2(n+1)$ for each sign. Therefore $\cD^2$ has
eigenvalues
\begin{equation}
\mu_n=(n+1)^2
\end{equation}
with multiplicity $4(n+1)$, and
\begin{equation}
K_0(\sigma)=\sum_{n=0}^\infty 4(n+1)\,e^{-\sigma(n+1)^2}.
\end{equation}
Its small-$\sigma$ asymptotic expansion begins as
\begin{equation}
K_0(\sigma)\sim \frac{2}{\sigma}-\frac{1}{3}+O(\sigma),
\qquad \sigma\to 0^+.
\label{eq:K0s2}
\end{equation}

We choose the scalar mode
\begin{equation}
f(\theta,\phi)=\cos\theta,
\end{equation}
which is proportional to the spherical harmonic $Y_{10}$. This is the
simplest nontrivial scalar deformation and should not be confused with the
unit normal vector field of the embedding $S^2\subset \mathbb{R}^3$.

For this choice,
\begin{equation}
\Sig_\gamma=\frac{\gamma}{2}\cos^2\!\theta\;\mathbf{1}_2.
\end{equation}
Using the standard local orthonormal coframe
\begin{equation}
e^1=d\theta,\qquad e^2=\sin\theta\,d\phi,
\end{equation}
one has
\begin{equation}
[\cD,\Sig]
=c\!\left(d\!\left(\tfrac{\gamma}{2}\cos^2\theta\right)\right)
=-\gamma\sin\theta\cos\theta\,c(e^1)
=-\gamma\sin\theta\cos\theta\,c(d\theta),
\end{equation}
and the deformed operator therefore takes the local form
\begin{equation}
\Qg
=\cD^2+i\gamma^2\sin\theta\cos\theta\,c(d\theta)
+\frac{\gamma^4}{4}\cos^4\!\theta\,\mathbf{1}_2.
\label{eq:QgS2}
\end{equation}
This expression is explicit at the operator level and avoids committing to
a particular Pauli-matrix realisation of the local Clifford action. It is
the appropriate starting point for a full $S^2$ spectral computation.
Here $\mathbf{1}_2$ denotes the identity on the rank-$2$ spinor fibre,
i.e.\ the local representative of $\mathbf{1}_S$ on $S^2$.

At the level of leading local small-$\sigma$ asymptotics, the two
$O(\gamma^4)$ sectors take the form
\begin{equation}
\Delta_{2,a}(\sigma)\sim -\frac{\gamma^4}{10}+O(\gamma^4\sigma),
\end{equation}
where the constant $-\gamma^4/10$ is the specialization of the general
formula~\eqref{eq:W2leading} to $f=\cos\theta$ on $S^2$, using
$\int_{S^2}\cos^4\theta\,d\Omega = \tfrac{4\pi}{5}$. For the quadratic
sector one similarly has
\begin{equation}
\Delta_{2,b}(\sigma)\sim C_{W_1W_1}\,\gamma^4\sigma+O(\gamma^4\sigma^2),
\end{equation}
where the coefficient $C_{W_1W_1}$ depends on the detailed local Clifford
normalisation and the exact spinor spectral data. These are local
asymptotic statements rather than exact spectral identities. A fully
rigorous determination of the exact $O(\gamma^4)$ coefficient requires
either an explicit spinor spherical harmonic computation or a controlled
pseudodifferential expansion, and is deferred.

\section{Effective scale-dependent spectral dimension}
\label{sec:spectraldim}

Motivated by the use of spectral dimension as a scale-dependent probe in
quantum-gravity-inspired
settings~\cite{AmbjornJurkiewiczLoll2005,AlkoferSaueressigZanusso2015},
we define the effective spectral dimension associated with the deformed
heat trace by
\begin{equation}
d_{s,\mathrm{eff}}(\sigma,\gamma):=-2\,\frac{\partial}{\partial\log\sigma}
\log K_\gamma(\sigma).
\label{eq:dseff_def}
\end{equation}
This is a diagnostic quantity attached to the heat trace of $\Qg$; it is
not to be identified automatically with ultraviolet spectral dimensions
arising in other quantum-gravity frameworks.

Using the perturbative expansion
\begin{equation}
K_\gamma(\sigma)=K_0(\sigma)+\gamma^4\Delta_2(\sigma)+O(\gamma^6),
\qquad
\Delta_2(\sigma):=\frac{\Delta_{2,a}(\sigma)+\Delta_{2,b}(\sigma)}{\gamma^4},
\end{equation}
one finds formally
\begin{equation}
d_{s,\mathrm{eff}}(\sigma,\gamma)=d_{s,\mathrm{eff}}(\sigma,0)
-2\gamma^4\frac{\partial}{\partial \log\sigma}\!\left(
\frac{\Delta_2(\sigma)}{K_0(\sigma)}\right)+O(\gamma^6).
\label{eq:dseffformal}
\end{equation}
In the undeformed $S^2$ model, using~\eqref{eq:K0s2},
\begin{equation}
d_{s,\mathrm{eff}}(\sigma,0)=2+\frac{1}{3}\sigma+O(\sigma^2),
\qquad \sigma\to 0^+.
\label{eq:dseff0}
\end{equation}
Thus the dissipative deformation begins at order $\gamma^4$, mirroring
Corollary~\ref{cor:rigidity}.

It is worth noting that the two contributions to $\Delta_2(\sigma)$ enter
$d_{s,\mathrm{eff}}$ in different ways. In the $S^2$ specialization, the
$W_2$ sector contributes the constant term $-\gamma^4/10$ to the
small-$\sigma$ heat-trace asymptotics, cf.~\eqref{eq:W2leading}. Although
this term is constant at the level of $K_\gamma(\sigma)$, it is not
invisible in $d_{s,\mathrm{eff}}(\sigma,\gamma)$ because $K_0(\sigma)$ is
$\sigma$-dependent. Using $K_0(\sigma)\sim 2/\sigma$ at leading order,
\begin{equation}
-2\gamma^4 \frac{\partial}{\partial \log\sigma}
\!\left(\frac{-1/10}{K_0(\sigma)}\right)
\sim +\frac{\gamma^4\sigma}{10},
\qquad \sigma\to 0^+.
\label{eq:W2dseff}
\end{equation}
Thus in the $S^2$ model the $W_2$ term contributes at order $\gamma^4\sigma$
to the perturbative deformation of $d_{s,\mathrm{eff}}$. The corresponding
$W_1\otimes W_1$ contribution is encoded by the leading local asymptotic
estimate~\eqref{eq:W1W1leading}. Since the exact order-$\gamma^4$
coefficient remains deferred, we do not combine both sectors into a single
closed formula here.

\section{Discussion and outlook}
\label{sec:discussion}

This paper introduced a Lindblad-deformed spectral geometric framework
and established its first analytic consequences. Starting from the
deformed Dirac operator $\Dg = \cD - i\gamma\Sig$, we showed in
Section~\ref{sec:deformation} that the associated positive operator
$\Qg = \Dg^*\Dg = \cD^2 + \gamma^2\Sig^2 - i\gamma[\cD,\Sig]$ is
self-adjoint under the boundedness hypotheses~(H1)--(H2), and in the
smooth endomorphism-valued setting of Section~\ref{sec:heatkernel} is of
Laplace type and supports a standard local heat-kernel expansion.

A key preliminary observation is that the naive
Lindblad trace $\Tr(e^{t\cL}\rho_0) = 1$ is spectrally trivial by
trace preservation, which motivates working with
$K_\gamma(\sigma) = \Tr(e^{-\sigma\Qg})$ as the correct geometric probe.

In Section~\ref{sec:vanishing} we proved that for scalar Lindblad data
$L = \sqrt{\gamma}\,f$ with $f$ real-valued, the first-order Duhamel
correction $\Delta_1(\sigma)$ vanishes identically. The proof rests on
cyclicity of the trace together with the trace-class properties of
$\cD f^2 e^{-\sigma\cD^2}$, which follow from the Hilbert--Schmidt
factorisation on a compact manifold. A complementary geometric viewpoint
--- the Clifford-tracelessness of $c(df)$ --- is consistent with the
exact vanishing. The upshot, recorded as Corollary~\ref{cor:rigidity}, is
that the spectral geometry is perturbatively rigid at order $\gamma^2$,
and the first nonzero dissipative correction to $K_\gamma(\sigma)$ enters
at order $\gamma^4$. Corollary~\ref{cor:SDW} upgrades this to a
coefficient-level statement: $A_0(\Qg)=A_0(\cD^2)$
exactly, and $A_2(\Qg)=A_2(\cD^2)+O(\gamma^4)$,
so there is no order-$\gamma^2$ correction to the first nontrivial
Seeley--DeWitt coefficient.

In Sections~\ref{sec:gamma4} and~\ref{sec:S2} we identified the two
distinct $O(\gamma^4)$ contributions: a direct $W_2$ insertion
contributing, in the round $S^2$ model with $f=\cos\theta$, the constant
$-\gamma^4/10$ to the leading local heat-trace
asymptotics, and a quadratic $W_1\otimes W_1$ term contributing at
leading order linearly in $\sigma$ in two dimensions.
Remark~\ref{rem:localinvariants} identifies, at the level of leading
local structure, the scalar integrals $\int_M f^4\,d\mathrm{vol}_g$
and $\int_M f^2|\nabla f|^2\,d\mathrm{vol}_g$ as the natural geometric
quantities arising from these two sectors; a full determination of all
contributing local invariants is deferred. These results are asymptotic
statements based on the local heat-kernel expansion and are not exact
spectral identities. In Section~\ref{sec:spectraldim} we
defined the effective scale-dependent spectral dimension and noted that
the $W_2$ constant term, while invisible at the level of $K_\gamma$
itself, contributes an $O(\gamma^4\sigma)$ correction to
$d_{s,\mathrm{eff}}$ through the $\sigma$-dependence of $K_0(\sigma)$.

Several directions remain open. The most immediate is the exact
determination of the order-$\gamma^4$ coefficient in the $S^2$ model,
which requires either a full spinor spherical harmonic computation or a
controlled pseudodifferential expansion; the leading local asymptotic
results obtained here provide the correct structural decomposition but
not the exact numerical coefficient. A separate question concerns the
precise relationship between the physical Lindblad semigroup acting on
density matrices and the spectral-geometric object $\Qg$ studied here;
constructing such a bridge would clarify the physical interpretation of
$K_\gamma(\sigma)$ and its effective dimension. More broadly, one could
investigate whether the Lindblad deformation has topological or
index-theoretic consequences, and whether a complete open-system
reformulation of the spectral triple axioms is tractable. It would also
be natural to explore how the framework extends beyond the scalar
deformation case, for instance to matrix-valued or geometry-dependent
jump operators, and whether the perturbative scale-dependent deformation of
$d_{s,\mathrm{eff}}$ identified here persists or takes a different form in
higher dimensions or on manifolds of nontrivial topology.

\appendix

\section{Derivation of $\mathcal{Q}_\gamma$:
  sign verification}
\label{app:signs}

For completeness we record the sign computation explicitly.

In a local orthonormal coframe $e^a$, we take the Riemannian Clifford
action to satisfy
\begin{equation}
c(e^a)c(e^b)+c(e^b)c(e^a)=-2\delta^{ab}\mathbf{1},
\end{equation}
so that a concrete Pauli-matrix realisation may be written as
$c(e^a)=i\sigma^a$. When convenient, this identifies the local Clifford
action with the standard Pauli-matrix realisation. Since
$\Dg=\cD-i\gamma\Sig$ and $\Dg^*=\cD+i\gamma\Sig$,
\begin{align}
\Qg&=(\cD+i\gamma\Sig)(\cD-i\gamma\Sig)\notag\\
&=\cD^2-i\gamma\cD\Sig+i\gamma\Sig\cD+\gamma^2\Sig^2\notag\\
&=\cD^2+\gamma^2\Sig^2+i\gamma(\Sig\cD-\cD\Sig)\notag\\
&=\cD^2+\gamma^2\Sig^2-i\gamma[\cD,\Sig].
\end{align}
Moreover,
\begin{equation}
([\cD,\Sig])^*= (\cD\Sig-\Sig\cD)^*=\Sig\cD-\cD\Sig=-[\cD,\Sig],
\end{equation}
so $-i[\cD,\Sig]$ is self-adjoint. In the round $S^2$ scalar model,
\begin{equation}
\Sig_\gamma=\frac{\gamma}{2}\cos^2\!\theta\;\mathbf{1}_2,
\qquad
[\cD,\Sig_\gamma]=-\gamma\sin\theta\cos\theta\,c(d\theta).
\end{equation}
Substituting this into \eqref{eq:Qgformula} yields the local operator form
displayed in \eqref{eq:QgS2}. If one chooses an explicit Pauli-matrix
realisation of the Clifford action, the commutator term is represented by
the corresponding local matrix-valued Clifford generator.

\section{Duhamel expansion to order $\gamma^4$}
\label{app:duhamel}

Write $\Qg=\cD^2+\gamma^2W_1+\gamma^4W_2$. The Duhamel formula gives
\begin{align}
e^{-\sigma\Qg}
&=e^{-\sigma\cD^2}
-\gamma^2\!\int_0^\sigma e^{-(\sigma-s)\cD^2}W_1e^{-s\cD^2}\,ds
-\gamma^4\!\int_0^\sigma e^{-(\sigma-s)\cD^2}W_2e^{-s\cD^2}\,ds
\notag\\
&\quad +\gamma^4\!\int_0^\sigma\!\!\int_0^s
e^{-(\sigma-s)\cD^2}W_1e^{-(s-r)\cD^2}W_1e^{-r\cD^2}\,dr\,ds
+O(\gamma^6).
\label{eq:duhamel_full}
\end{align}
Taking traces and using cyclicity yields
\begin{align}
K_\gamma(\sigma)
&=K_0(\sigma)
-\gamma^2\sigma\,\Tr(W_1e^{-\sigma\cD^2})
-\gamma^4\sigma\,\Tr(W_2e^{-\sigma\cD^2})\notag\\
&\quad +\gamma^4\!\int_0^\sigma\!\!\int_0^s
\Tr\!\left(e^{-(\sigma-s)\cD^2}W_1e^{-(s-r)\cD^2}W_1e^{-r\cD^2}\right)
dr\,ds+O(\gamma^6).
\end{align}
The vanishing of the $\gamma^2$ term follows from
Proposition~\ref{prop:vanishing}. For the $W_2$ term, we specialize to
the $S^2$ model with $f=\cos\theta$. Since
$W_2=\tfrac{1}{4}\cos^4\theta\,\mathbf{1}_2$ and
$\tr_S(W_2)=\tfrac{1}{2}\cos^4\theta$ for the rank-$2$ spinor bundle,
applying~\eqref{eq:localasymp} gives
\begin{equation}
\Tr(W_2\,e^{-\sigma\cD^2})\sim
\frac{1}{4\pi\sigma}\cdot\frac{1}{2}\int_{S^2}\cos^4\theta\,d\Omega
=\frac{1}{4\pi\sigma}\cdot\frac{1}{2}\cdot\frac{4\pi}{5}
=\frac{1}{10\sigma},
\end{equation}
giving $\Delta_{2,a}(\sigma)\sim -\gamma^4/10$, which is the
specialization of the general formula~\eqref{eq:W2leading} to this
model. The prefactor $-\gamma^4\sigma$ in $\Delta_{2,a}(\sigma)$ produces
a constant leading contribution in two dimensions.

For the $W_1\otimes W_1$ term, a leading local approximation is obtained
by replacing the two heat factors by a single diagonal heat kernel at
scale $\sigma$, yielding a contribution proportional to
$\Tr(W_1^2e^{-\sigma\cD^2})$. In the scalar model, $W_1=-if\,c(df)$, so
the relevant local endomorphism is controlled by $f^2c(df)^2$. The
precise signed coefficient depends on the chosen normalisation of the
local Riemannian Clifford representation and on the exact spinor harmonic
data. Since the present paper does not carry out that exact
normalisation-dependent computation globally, we record only the
structural conclusion that in two dimensions the quadratic sector
contributes at order $\gamma^4\sigma$:
\begin{equation}
\Delta_{2,b}(\sigma)\sim C_{W_1W_1}\,\gamma^4\sigma
+O(\gamma^4\sigma^2),
\end{equation}
for a local coefficient $C_{W_1W_1}$ to be fixed by a full spectral
calculation. The formulas in Section~\ref{sec:gamma4} are exact at the
Duhamel level; only the subsequent small-$\sigma$ evaluations are
asymptotic.


\bibliographystyle{plain}
\bibliography{bibliography}

\end{document}